\begin{document}

\title{EWSJF: An Adaptive Scheduler with Hybrid Partitioning for Mixed-Workload LLM Inference}

\author{Bronislav Sidik}
\email{slava.sidik@huawei.com}
\affiliation{%
  \institution{Toga Networks (Huawei)}
  \country{Israel}
}

\author{Chaya Levi}
\email{chaya.levi@huawei.com}
\affiliation{%
  \institution{Toga Networks (Huawei)}
  \country{Israel}
}

%
\author{Joseph Kampeas}
\email{joseph.kampeas@huawei.com}
\affiliation{%
  \institution{Toga Networks (Huawei)}
  \country{Israel}
}
\begin{abstract}
Serving Large Language Models (LLMs) under mixed workloads—short, latency-sensitive interactive queries alongside long, throughput-oriented batch requests—poses a fundamental scheduling challenge. Standard First-Come, First-Served (FCFS) policies suffer from severe head-of-line blocking, leading to high tail latency and underutilized hardware. We introduce \textbf{EWSJF} (Effective Workload-based Shortest Job First), an adaptive request-level scheduler that learns workload structure in real time to jointly improve fairness and throughput. EWSJF operates upstream of execution-level schedulers and integrates four components: (1) \textit{Refine-and-Prune}, an unsupervised partitioning algorithm that discovers performance-homogeneous request groups; (2) \textit{Dynamic Queue Routing} for assigning requests to these groups; (3) \textit{Density-Weighted Scoring}, a context-aware prioritization function balancing urgency and fairness; and (4) \textit{Bayesian Meta-Optimization}, which continuously tunes scoring and partitioning parameters based on live performance feedback. Implemented in vLLM, EWSJF improves end-to-end throughput by over 30\% and reduces average Time-To-First-Token for short requests by up to 4$\times$ compared to FCFS. These results demonstrate that adaptive, learning-based request scheduling is a critical missing layer for efficient and responsive LLM serving.Implementation available at \url{https://anonymous.4open.science/r/vllm_0110-32D8}.
\end{abstract}

\begin{CCSXML}
<ccs2012>
   <concept>
       <concept_id>10010147.10010178</concept_id>
       <concept_desc>Computing methodologies~Artificial intelligence</concept_desc>
       <concept_significance>500</concept_significance>
   </concept>
   <concept>
       <concept_id>10010520.10010521.10010537</concept_id>
       <concept_desc>Computer systems organization~Distributed architectures</concept_desc>
       <concept_significance>500</concept_significance>
   </concept>
</ccs2012>
\end{CCSXML}

\ccsdesc[500]{Computing methodologies~Artificial intelligence}
\ccsdesc[500]{Computer systems organization~Distributed architectures}

\keywords{LLM Inference, Scheduling, Queue Management, Adaptive Systems}

\maketitle

\section{Introduction}
\label{sec:introduction}

Large Language Models (LLMs) have shifted the primary computational bottleneck from training to inference, making efficient serving systems essential for real-world deployment. Frameworks such as vLLM~\cite{kwon2023efficient} significantly improve throughput through techniques like PagedAttention, yet they are optimized for relatively uniform workloads. In practice, production LLM services increasingly operate under \textit{mixed workloads}: short, latency-sensitive interactive queries (e.g., chatbots) arriving alongside long, throughput-oriented batch jobs (e.g., summarization). This heterogeneity exposes fundamental limitations in standard request-level scheduling.

The default First-Come, First-Served (FCFS) scheduler suffers from severe head-of-line (HoL) blocking, where a single long request delays all others. Under concurrent load, this leads to extreme spikes in Time-To-First-Token (TTFT) for short queries—sometimes exceeding minutes~\cite{vllm_issue_3096}—forcing operators to choose between degraded user experience or underutilized hardware. 

Existing solutions fall short. Static priority queues provide only coarse separation and cannot adapt to shifting workload distributions. Academic schedulers such as Orca~\cite{sheng2024orca} and G-Fair~\cite{li2024fifo} target theoretical optimality or fairness in structured, multi-tenant environments, but do not address the dynamic, unstructured nature of real-world LLM traffic. Moreover, execution-level schedulers like Orca~\cite{sheng2024orca} and Sarathi~\cite{agrawal2024sarathi} operate \emph{within} the decoding loop, optimizing token-level execution after batches have already been formed. Modern engines such as vLLM already incorporate many of these iteration-level optimizations (e.g., continuous batching), leaving the upstream problem of \emph{request admission} largely unaddressed.

\textbf{EWSJF} (Effective Workload-based Shortest Job First) fills this gap. Rather than modifying the execution engine, EWSJF operates \emph{above} it as an intelligent admission controller. Its goal is to organize heterogeneous requests into performance-homogeneous batches, enabling downstream execution-level schedulers to operate more efficiently while avoiding the fairness and starvation issues of naive SJF-style heuristics.

EWSJF is built around three core ideas: (1) \textbf{Refine-and-Prune}, an unsupervised algorithm that discovers performance-homogeneous request groups; (2) \textbf{Density-Weighted Scoring}, a context-aware prioritization function balancing urgency and fairness; and (3) a \textbf{Bayesian Meta-Optimizer} that continuously tunes scheduling parameters based on live performance feedback.

Our contributions are as follows:
\begin{itemize}
    \item We introduce EWSJF, a novel adaptive request-level scheduler for LLM inference that autonomously learns and adapts to workload structure.
    \item We propose \textit{Refine-and-Prune}, a hybrid partitioning algorithm that constructs performance-homogeneous request queues without requiring a fixed number of clusters.
    \item We develop a context-aware, learnable scoring framework that balances latency, fairness, and computational cost through density-weighted prioritization.
    \item We implement EWSJF in vLLM and demonstrate substantial improvements in both throughput and user-perceived latency under realistic mixed workloads.
\end{itemize}

To facilitate reproducibility and community adoption, we provide our full implementation as a pluggable module for vLLM, along with the dataset generation scripts, in our anonymous repository~\cite{ewsjf_impl}
\section{Related Work}
\label{sec:related_work}

Scheduling for LLM inference has rapidly emerged as a critical research area, driven by the need to balance throughput, latency, and fairness under heterogeneous workloads. EWSJF introduces a learning-based approach that is distinct from, yet complementary to, existing strategies. Prior work can be grouped into four categories: throughput-oriented schedulers, fairness-driven schedulers, agent-level schedulers, and execution co-design frameworks. Crucially, EWSJF complements iteration-level schedulers like Orca~\cite{sheng2024orca} by optimizing request admission rather than execution. It serves as an upstream policy that creates homogeneous batches. We therefore evaluate against request-level baselines (FCFS, SJF) to isolate partitioning gains. 

\subsection{LLM Scheduling Policies}

Table~\ref{tab:scheduler_comparison} summarizes key differences between EWSJF and representative state-of-the-art systems.

\paragraph{Throughput-Optimal Schedulers.}
Orca~\cite{sheng2024orca} and Sarathi~\cite{agrawal2024sarathi} pursue near-optimal throughput via \emph{iteration-level scheduling}, enabling preemption and token-level control. While theoretically powerful, these approaches require deep integration with the execution engine. EWSJF instead operates at the batch level, using adaptive heuristics to approximate SJF behavior without intrusive architectural changes.

\paragraph{Fairness-Driven Schedulers.}
G-Fair~\cite{li2024fifo} introduces a Virtual Time mechanism to enforce fairness across predefined user groups. This is well-suited for multi-tenant environments with explicit SLAs. In contrast, EWSJF operates in an \emph{unsupervised} setting, discovering performance-homogeneous groups directly from workload structure and providing emergent fairness without user labels.

\paragraph{Agent and Multi-Step Schedulers.}
Autellix~\cite{gao2024autellix} targets agentic workloads where LLM inference interleaves with tool use. It improves GPU utilization through decoupling and co-scheduling. EWSJF complements such systems by optimizing the raw stream of inference requests generated by agents, which often mix short planning prompts with long synthesis prompts.

\paragraph{Execution and Scheduling Co-Design.}
Hydra~\cite{zhang2024hydra} improves efficiency by decoupling prefill and decode phases. Other work~\cite{zhang2024accelerating} accelerates multi-turn dialogue via KV-cache optimizations combined with SJF-like heuristics. These techniques are orthogonal to EWSJF, which can serve as a front-end admission policy for such execution backends.

\paragraph{Community Proposals.}
Within the vLLM community, several RFCs propose features aligned with EWSJF’s design, including pluggable schedulers~\cite{vllm_rfc_pluggable}, SJF-like sorting~\cite{vllm_rfc_sjf}, and strategies to mitigate head-of-line blocking~\cite{vllm_rfc_unblock, vllm_issue_3096}. EWSJF synthesizes these ideas into a unified, adaptive system.

\begin{table*}[t]
\centering
\caption{Comparison of representative LLM scheduling and serving systems.}
\label{tab:scheduler_comparison}
\resizebox{0.95\textwidth}{!}{
\begin{tabular}{lcccc}
\toprule
\textbf{System} & \textbf{Primary Goal} & \textbf{Core Mechanism} & \textbf{Granularity} & \textbf{Adaptation Method} \\
\midrule
\textbf{EWSJF (This Work)} & Balanced Perf.\ \& Fairness & Unsupervised Clustering & Batch Level & Learns Workload Structure \\
Orca / Sarathi & Max Throughput & Iteration-Level Preemption & Token / Iteration & Real-Time Job State \\
G-Fair & Provable Fairness & Virtual Time & Batch Level & Group Service Deficit \\
Autellix & Agent Throughput & Decoupling \& Co-Scheduling & Agent Step & Agent State (Think/Act) \\
Hydra & Execution Efficiency & Prefill/Decode Decoupling & Sub-Batch / Step & Dynamic Batch Sizing \\
vLLM (Default) & High Throughput & Continuous Batching (FCFS) & Batch Level & None (Static Policy) \\
\bottomrule
\end{tabular}}
\end{table*}

\subsection{Workload-Aware Clustering and Partitioning}

The Refine-and-Prune algorithm introduces a hybrid clustering strategy tailored to the one-dimensional, dynamic nature of LLM request scheduling. Traditional clustering methods such as K-Means require specifying $k$ \textit{a priori} and often produce non-homogeneous ``mega-queues.'' Density-based methods like DBSCAN can detect local structure but are sensitive to hyperparameters and may over-segment sparse regions.

Refine-and-Prune differs in two key ways: (1) it uses a dynamic, density-driven heuristic to recursively split clusters at significant gaps, and (2) it incorporates domain-specific constraints—such as minimum queue width and a pruning stage—to ensure partitions are both performance-homogeneous and operationally viable. This hybrid approach yields queue structures that better reflect the performance landscape of LLM serving workloads.

\subsection{Predictive Cost Modeling for LLM Scheduling}

A complementary line of work explores predictive models that estimate generation cost or output length for LLM requests~\cite{kim2023length, zhang2024tokenlevel, liu2024output, wang2024runtime, gao2024difficulty}. These methods typically train a supervised model to classify or regress the expected output length of a request, which is then used to guide batching or scheduling decisions. However, they exhibit two key limitations in the context of mixed-workload LLM serving. First, they require training a separate predictor for each workload distribution, making them brittle under non-stationary traffic patterns or deployment-specific usage shifts. Second, they operate on output-side signals that are not available at scheduling time and may correlate poorly with prefill compute cost, which dominates latency for short queries. As a result, prediction errors can propagate into scheduling decisions and degrade fairness or throughput. EWSJF avoids these issues by relying solely on input-side statistics and adapting online via Bayesian meta-optimization, eliminating the need for workload-specific retraining while remaining robust to distributional drift.

Taken together, these differences emphasize that EWSJF is not a static heuristic but a learning-based system: its Bayesian meta-optimizer performs continual policy search over a non-convex landscape, allowing the scheduler to adapt online in a manner fundamentally aligned with modern machine learning methodology.
\section{EWSJF System Architecture}
\label{sec:architecture}

EWSJF is a closed-loop, learning-based scheduler designed to operate as a pluggable module within high-performance LLM inference servers. Its architecture, illustrated in Figure~\ref{fig:architecture}, consists of two interdependent control loops operating at distinct timescales: a \textbf{fast tactical scheduling loop} responsible for real-time decision-making, and a \textbf{slow strategic optimization loop} responsible for periodic policy refinement.

\begin{figure*}[t]
  \centering
  \includegraphics[width=0.9\linewidth]{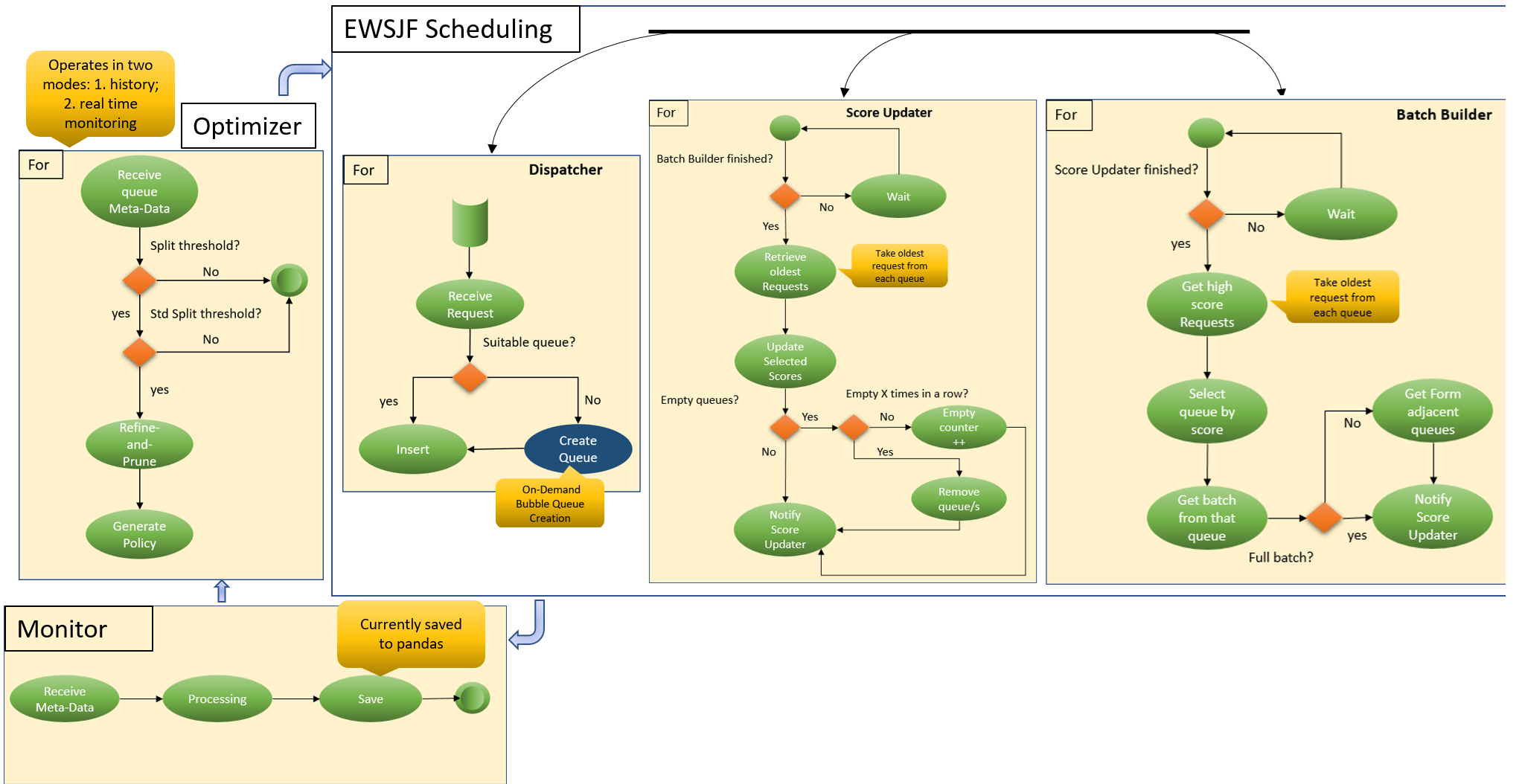}
    \Description{A block diagram showing the EWSJF architecture. It is divided into two parts: a strategic loop on the left and a tactical loop on the right. The strategic loop contains a Monitor and an Optimizer that feeds into the tactical loop. The tactical loop contains a Dispatcher, Score Updater, and Batch Builder flowing in a cycle.}
  \caption{EWSJF architecture showing the interaction between tactical components (Dispatcher, Score Updater, Batch Builder) and strategic components (Monitor, Optimizer). The strategic loop supports both offline (historical) and online (real-time) modes.}
  \label{fig:architecture}
\end{figure*}

\subsection{Strategic Loop: Policy Generation}

The strategic loop is responsible for generating and refining scheduling policies based on observed workload characteristics. Each policy consists of two components: (1) the queue structure, defined by the number of queues and their prompt-length boundaries, and (2) a set of scoring parameters used for prioritization.

\begin{itemize}
    \item \textbf{Monitor:} Collects metadata from completed requests. Depending on the mode, this data is aggregated into either a large historical dataset or a compact real-time window.
    \item \textbf{Optimizer:} A background process that analyzes collected data to produce updated scheduling policies. It operates in two modes:
    \begin{itemize}
        \item \textbf{Offline (history) mode:} Executes the full Refine-and-Prune algorithm on a large dataset to generate a robust baseline policy. This mode is computationally intensive and runs infrequently.
        \item \textbf{Online (real-time) mode:} Performs lightweight adjustments to the baseline policy using statistical heuristics on recent data. This mode enables frequent updates with minimal overhead.
    \end{itemize}
\end{itemize}

The algorithms used within the optimizer are detailed in Section~\ref{sec:algorithms}.

\subsection{Tactical Loop: Policy Execution}

The tactical loop executes at every scheduling opportunity (typically every few milliseconds), using the active policy provided by the strategic layer. Its role is to make low-latency decisions governing request admission, queue prioritization, and batch construction. The full procedure is formalized in Algorithm~\ref{alg:tactical_loop}.

\begin{itemize}
    \item \textbf{Dispatcher:} Routes incoming requests to the appropriate queue based on current prompt-length boundaries. If a request falls into a gap between existing queues, the Dispatcher triggers the \textit{On-Demand Bubble Queue Creation} mechanism (Section~\ref{sec:bubble_queue}).
    \item \textbf{Score Updater:} Computes a priority score for every non-empty queue using the current scoring policy. It also prunes queues that remain empty beyond a configurable threshold.
    \item \textbf{Batch Builder:} Selects the highest-scoring queue and constructs a batch by greedily pulling requests. If capacity remains, it backfills the batch with requests from adjacent queues to maximize GPU utilization.
\end{itemize}

\begin{algorithm}[t]
   \caption{EWSJF Tactical Scheduling Loop}
   \label{alg:tactical_loop}
\begin{algorithmic}[1]
   \STATE {\bfseries Input:} $AllQueues$, $EmptyThreshold$
   \STATE $updated\_scores \leftarrow \{\}$
   \FOR{\textbf{each} $q \in AllQueues$}
       \IF{$q$ is not empty}
           \STATE $req \leftarrow q.\text{peek}()$
           \STATE $score \leftarrow \textsc{CalculateScore}(req, q.\text{profile})$
           \STATE $updated\_scores[q.id] \leftarrow score$
       \ELSE
           \STATE $q.empty\_cnt \leftarrow q.empty\_cnt + 1$
           \IF{$q.empty\_cnt > EmptyThreshold$}
               \STATE \textsc{RemoveQueue}($q$)
           \ENDIF
       \ENDIF
   \ENDFOR
   \STATE $batch \leftarrow []$
   \IF{$updated\_scores \neq \emptyset$}
       \STATE $q_{prim} \leftarrow \textsc{ArgMax}(updated\_scores)$
       \STATE $batch \leftarrow \textsc{GreedyFill}(batch, q_{prim})$
       \IF{$batch$ is not full}
           \STATE $Q_{adj} \leftarrow \textsc{GetAdjacent}(q_{prim})$
           \STATE $batch \leftarrow \textsc{Backfill}(batch, Q_{adj})$
       \ENDIF
   \ENDIF
   \STATE \textbf{return} $batch$
\end{algorithmic}
\end{algorithm}
\section{Core Algorithmic Contributions}
\label{sec:algorithms}

EWSJF integrates three algorithmic components that work in concert: the \textbf{Refine-and-Prune} algorithm for dynamic queue generation, the \textbf{On-Demand Bubble Queue} mechanism for real-time responsiveness, and a \textbf{learnable scoring function} tuned by a meta-optimizer.

\subsection{Learnable and Context-Aware Scoring}
\label{sec:scoring}

While Refine-and-Prune determines the \emph{structure} of the scheduler, the \emph{operational policy}—deciding which queue to prioritize at each moment—is governed by a dynamic scoring function. A static formula is insufficient for a system that must adapt to evolving workloads and balance competing objectives. 

The score for the oldest request $r$ in queue $q$ is defined as:

\begin{equation}
\label{eq:score}
\begin{split}
\text{Score}(r, q) = qf \cdot ( & w_{\text{base}} + w_{\text{urgency}} \cdot cs \\
& + w_{\text{fairness}} \cdot \log(b+1) )
\end{split}
\end{equation}

where each term captures a distinct scheduling principle:

\begin{itemize}
    \item \textbf{Compute Score ($cs$):} Defined as $W_t / C_{\text{prefill}}(b)$, where $W_t$ is the request's wait time and $C_{\text{prefill}}(b)$ is the estimated prefill cost for input length $b$. This term reflects urgency normalized by computational cost.
    \item \textbf{Queue Index Factor ($qf$):} Defined as $q_i / (b+1)$, where $q_i$ is the queue index. This term prioritizes shorter jobs, forming the basis of the SJF heuristic.
    \item \textbf{Fairness Term ($\log(b+1)$):} A logarithmic boost for longer jobs, ensuring they are not indefinitely starved by a stream of short requests.
\end{itemize}

The weights $(w_{\text{base}}, w_{\text{urgency}}, w_{\text{fairness}})$ are \emph{learnable} and dynamically tuned by the meta-optimizer. Moreover, the weights are \textbf{context-aware}: EWSJF learns a meta-policy that maps queue characteristics (e.g., mean prompt length $\bar{b}_q$) to specific weights. For example:

\[
w_{\text{urgency}}(\bar{b}_q) = a_u \cdot \bar{b}_q + b_u,
\]

allowing the scheduler to emphasize urgency in short queues and fairness in long queues.

Figure~\ref{fig:scoring_dynamics} illustrates how the relative priority of queues evolves as the meta-optimizer adjusts these weights.

\begin{figure}[t]
  \centering
  \includegraphics[width=\linewidth]{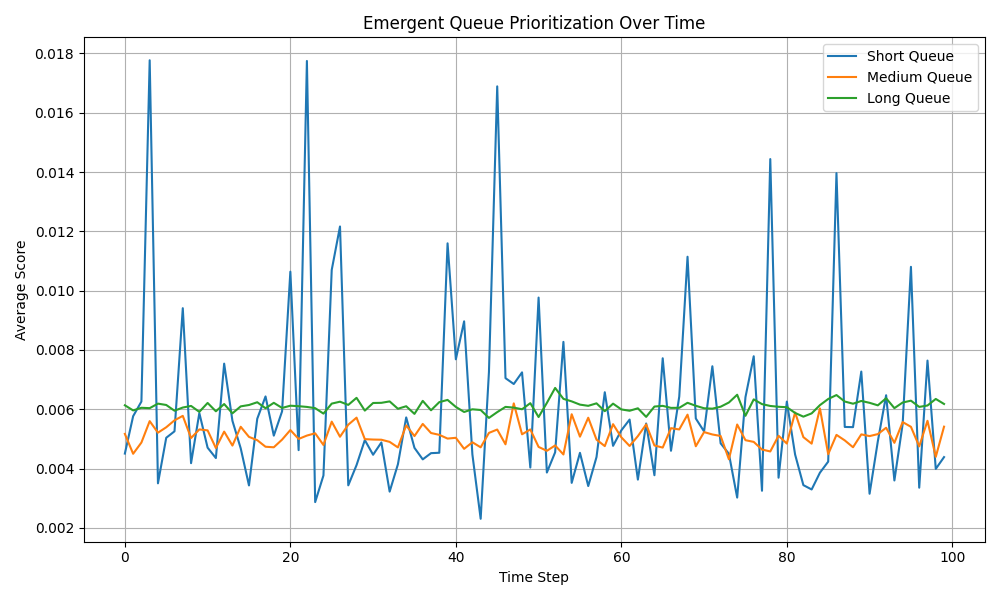}
  
  \Description{A line graph showing the evolution of priority scores over time for three different queue types: Short, Medium, and Long. The X-axis represents simulation time steps, and the Y-axis represents the average priority score. The lines fluctuate and cross each other, illustrating how the meta-optimizer dynamically adjusts the importance of each queue type based on workload conditions.}
  
  \caption{Simulation of EWSJF's context-aware scoring. As the meta-optimizer adjusts scoring parameters over time, the relative priority of short, medium, and long queues shifts dynamically.}
  \label{fig:scoring_dynamics}
\end{figure}

\subsection{The Refine-and-Prune Algorithm}
\label{sec:refine_prune}

The \textbf{Refine-and-Prune} algorithm is the strategic core of EWSJF. Its purpose is to construct a set of prompt-length queues that are (i) performance-homogeneous, (ii) contiguous and non-overlapping, and (iii) bounded in number. Unlike standard $k$-means, which minimizes global variance, Refine-and-Prune explicitly targets \emph{local} homogeneity in the prompt-length domain, ensuring that each queue corresponds to a region with similar computational behavior. 

\paragraph{Formalism.}
Let

\[
D = \{b_1, b_2, \dots, b_N\}, \qquad b_1 \le b_2 \le \dots \le b_N,
\]

be the sorted set of prompt lengths in the strategic observation window. We seek a partition

\[
Q = \{q_1, q_2, \dots, q_k\},
\]

where each queue $q_i$ corresponds to a contiguous interval

\[
q_i = [\min\_len_i, \max\_len_i).
\]

The algorithm proceeds in three stages.

\textbf{Stage 1: Coarse Partitioning.}
We initialize the queue structure by applying $k$-means with $k=3$, reflecting the empirically observed short/medium/long regimes in mixed LLM workloads. These coarse clusters provide stable anchors for subsequent refinement.

\textbf{Stage 2: Recursive Refinement.}
To eliminate heterogeneous ``mega-queues,'' each coarse cluster $C$ is recursively analyzed. Let

\[
G = \{\, b_{i+1} - b_i \mid b_i, b_{i+1} \in C \,\}
\]

be the set of consecutive gaps within $C$. A split is triggered at index $j$ when
\begin{equation}
\label{eq:split_condition}
\text{Gap}_j > \alpha \cdot \text{mean}(G),
\end{equation}
where $\alpha > 1$ is a dynamic \emph{significance ratio} tuned by the meta-optimizer. Recursion continues until no significant gaps remain or the cluster width falls below a minimum threshold.

\textbf{Stage 3: Intelligent Pruning.}
Recursive refinement may produce overly narrow ``micro-queues.'' To prevent overfitting, adjacent queues are merged based on a \textbf{Scheduling Utility} metric:
\begin{equation}
\label{eq:utility}
U(q_i, q_{i+1}) = 
\frac{\rho(q_i) + \rho(q_{i+1})}
     {|\bar{b}_{i+1} - \bar{b}_i| + \varepsilon},
\end{equation}
where $\rho(q)$ is the request density, $\bar{b}_i$ is the mean prompt length in $q_i$, and $\varepsilon$ is a small constant for numerical stability. Queues with the lowest utility are merged until the system satisfies the configured \texttt{max\_queues} budget.

\begin{table}[t]
\caption{Comparison of clustering strategies for queue partitioning. EWSJF achieves high robustness and adaptivity by combining coarse clustering with recursive refinement and utility-based pruning.}
\label{tab:clustering_comparison}
\begin{center}
\begin{small}
\begin{sc}
\resizebox{\linewidth}{!}{
\begin{tabular}{lccc}
\toprule
Method & Robustness & Adaptivity & Fit \\
\midrule
Static & Low & None & Brittle \\
Naive K-Means & Medium & Low & Mega-Queues \\
DBSCAN & Medium & Medium & Micro-Queues \\
\textbf{EWSJF} & \textbf{High} & \textbf{High} & \textbf{Optimal} \\
\bottomrule
\end{tabular}
}
\end{sc}
\end{small}
\end{center}
\end{table}

\subsection{On-Demand Bubble Queue Creation}
\label{sec:bubble_queue}

Refine-and-Prune operates periodically (e.g., every 10 minutes), but workloads can evolve rapidly. To handle novel request types that appear between optimizer runs—specifically those falling into significant gaps between existing queues—EWSJF employs a reactive mechanism called \textbf{Bubble Queue Creation}. When the Dispatcher detects a gap-falling request, it instantiates a temporary queue centered on the request's prompt length and constrained by neighbor boundaries. This ensures immediate responsiveness without waiting for the next strategic optimization cycle. The detailed algorithm and conceptual illustration are provided in Appendix~\ref{app:bubble}.

\subsection{Learnable Scoring and Meta-Optimization}
\label{sec:meta_opt}

While Refine-and-Prune determines the structural layout of queues, the operational behavior of EWSJF is governed by a learnable scoring function. This function assigns dynamic priorities to queues and is continuously tuned by a bandit-based meta-optimizer to balance latency, fairness, and throughput under evolving workloads. Unlike heuristic auto-scalers, our Bayesian optimizer treats scheduling as a non-convex policy search problem, enabling EWSJF to escape the local optima that trap static SJF-style heuristics and to adapt to workload regimes no fixed rule can capture. In practice, the Bayesian optimizer converges within 5–8 trials, after which the reward stabilizes and parameter updates become negligible (see Appendix~\ref{appendix:meta_opt_behavior}). The scheduling landscape is highly non‑convex and discontinuous, making gradient‑based tuning ineffective; Bayesian Optimization provides a principled way to explore this space and identify high‑reward policies. 

\subsubsection{Context-Aware Scoring Function}

For each queue $q$, EWSJF assigns a priority score to its head-of-line request $r$ using a parametric scoring function $\Phi(r, q)$:
\begin{equation}
\label{eq:full_score}
\Phi(r, q)
= qf \cdot \left(
    w_{\text{base}}
    + w_{\text{urg}} \cdot cs
    + w_{\text{fair}} \cdot \log(b+1)
\right),
\end{equation}
where:
\begin{itemize}
    \item $b$ is the input prompt length of request $r$,
    \item $cs = W_t / C_{\text{prefill}}(b)$ is the compute-normalized urgency,
    \item $qf = q_i / (b+1)$ is an SJF-inspired queue factor,
    \item $\mathbf{w} = \{w_{\text{base}}, w_{\text{urg}}, w_{\text{fair}}\}$ are learnable weights.
\end{itemize}

To enable specialization across heterogeneous queues, the weights $\mathbf{w}$ are produced by a meta-policy $\pi(\bar{b}_q)$ that maps the queue's mean prompt length $\bar{b}_q$ to a set of control parameters. For example,

\[
w_{\text{urg}}(\bar{b}_q) = a_u \cdot \bar{b}_q + b_u,
\]

allows EWSJF to emphasize urgency in short queues and fairness in long queues. This context-aware formulation enables smooth adaptation to workload shifts without manual tuning.

\subsubsection{Bandit-Based Meta-Optimizer}

To autonomously optimize the meta-parameters

\[
\Theta = \{a_u, b_u, a_f, b_f, \dots\},
\]

EWSJF employs a Bayesian Optimization framework.

\paragraph{Action Space and Trials.}
Each action corresponds to selecting a complete scheduling policy parameterized by $\Theta$. The system executes this policy for a fixed trial interval $\Delta T$ (e.g., 10--15 minutes), collects performance statistics, and updates its posterior over the reward landscape.

\paragraph{Reward Function.}
The optimizer seeks to maximize a multi-objective reward:
\begin{equation}
\label{eq:reward}
R(\Theta)
= \lambda_1 \mathcal{C}
+ \lambda_2 \mathcal{L}
- \lambda_3 \mathcal{S}
- \lambda_4 \mathcal{U},
\end{equation}
where:
\begin{itemize}
    \item $\mathcal{C}$ encourages compact, homogeneous queues,
    \item $\mathcal{L}$ penalizes load imbalance across queues,
    \item $\mathcal{S}$ penalizes excessive queue proliferation,
    \item $\mathcal{U}$ captures user-experience penalties such as latency.
\end{itemize}
The coefficients $\lambda_1, \dots, \lambda_4$ balance these competing objectives. Over successive trials, the optimizer converges toward a policy that jointly improves throughput, fairness, and responsiveness.
Because the scoring weights are produced by a meta-policy 
$\pi(\bar{b}_q)$, EWSJF effectively learns a mapping from queue
statistics to prioritization decisions, making the scheduler a
learned policy rather than a fixed heuristic.
\section{System Properties}
\label{sec:properties}

EWSJF is designed to maintain a stable dynamic equilibrium under mixed workloads. We formally analyze its properties regarding stability, correctness, and complexity; full proofs and detailed analysis are provided in Appendix~\ref{app:theory}.

\paragraph{Stability and Correctness.}
The hierarchical control structure ensures stability by anchoring policy generation to long-term historical data via the \textit{offline optimizer}, preventing thrashing, while the \textit{online tuner} handles real-time fluctuations. Correctness is guaranteed by the \textit{Refine-and-Prune} algorithm, which partitions the prompt-length space into contiguous, non-overlapping intervals, ensuring deterministic request routing ($r \to q_i$).

\paragraph{Computational Complexity.}
The tactical scheduling loop, which resides on the critical path, operates in $O(k)$ time, where $k$ is the number of active queues. The computationally intensive strategic optimization ($O(N \log N)$) is offloaded to a background thread.

\begin{theorem}[Efficiency and Stability]
\label{thm:main_properties}
For a system with $N$ historical requests and $k$ active queues, EWSJF guarantees $O(k)$ scheduling latency, non-blocking background optimization, and starvation freedom. (See Appendix~\ref{app:theory} for formal analysis and proofs).
\end{theorem}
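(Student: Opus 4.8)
The plan is to decompose the theorem into its three independent guarantees --- $O(k)$ scheduling latency, non-blocking background optimization, and starvation freedom --- and establish each separately, since they concern different parts of the two-loop architecture. The first two are structural and complexity claims that follow from a direct reading of Algorithm~\ref{alg:tactical_loop} together with the thread model, whereas starvation freedom is the substantive analytic claim and will require an argument about the asymptotic behavior of the scoring function in~\eqref{eq:full_score}.

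For the latency bound I would proceed by a line-by-line accounting of Algorithm~\ref{alg:tactical_loop}. The single \textbf{for} loop visits each of the $k$ active queues exactly once; within an iteration, \textsc{peek} and the empty-counter update are $O(1)$, and \textsc{CalculateScore} evaluates the closed form~\eqref{eq:full_score}, which is a constant number of arithmetic and $\log$ operations, hence $O(1)$. The subsequent \textsc{ArgMax} over the score dictionary is $O(k)$. \textsc{GreedyFill} and \textsc{Backfill} each pull at most $B$ requests, where $B$ is the hardware-bounded (constant) batch capacity, and \textsc{GetAdjacent} touches only a constant number of neighbors; these contribute $O(B)=O(1)$. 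Summing gives $O(k)$ total, independent of $N$. I would then confirm that the strategic work --- sorting the $N$ prompt lengths, one-dimensional $k$-means, and gap computation in Refine-and-Prune --- is $O(N\log N)$, and argue non-blocking execution from the architecture: the optimizer runs on a separate background thread and publishes each completed policy as an immutable snapshot via a single atomic reference swap (double buffering), so the tactical loop always dereferences a consistent policy and never waits on a lock held by the optimizer. Hence the $O(N\log N)$ cost never appears on the critical path.

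The core of the proof is starvation freedom, which I would establish by contradiction. Suppose some admitted request $r$ is never served. Routing is deterministic and no request is ever dropped (\textsc{RemoveQueue} removes only empty queues), and within a queue service is FIFO; since only finitely many requests precede $r$ at its arrival and each scheduling round drains at least one request from the argmax queue, $r$ reaches the head of its queue in finite time. Once at the head, its score is $\Phi(r,q)=qf\cdot\bigl(w_{\text{base}}+w_{\text{urg}}\cdot cs+w_{\text{fair}}\log(b+1)\bigr)$ with $cs=W_t/C_{\text{prefill}}(b)$, so as the unserved wait time $W_t\to\infty$ the score grows without bound, provided $w_{\text{urg}}$ and $qf$ are bounded below by positive constants and $C_{\text{prefill}}(b)$ is bounded above. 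The latter holds because prompt length is capped by the model context window, giving $C_{\text{prefill}}(b)\le C_{\max}$. Any competing head-of-line request that is not itself starved is periodically served and replaced by a fresher request of bounded wait, hence bounded score; therefore after finite time $\Phi(r,q)$ exceeds every bounded competitor and $r$ becomes the \textsc{ArgMax}, contradicting the assumption. If several requests are simultaneously starved, applying the argument to the one of maximal wait yields the same contradiction. The competitor set stays finite throughout, since the number of queues is bounded by \texttt{max\_queues} plus a finite set of transient bubble queues.

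The hard part will be making the starvation argument robust to the learnable, context-aware weights: because $(w_{\text{base}},w_{\text{urg}},w_{\text{fair}})$ are produced online by the meta-optimizer through a map such as $w_{\text{urg}}(\bar b_q)=a_u\bar b_q+b_u$, an adversarial or poorly-converged policy could in principle drive $w_{\text{urg}}$ to zero and flatten the wait-time dependence, destroying the unbounded-growth mechanism. I would close this gap by constraining the meta-policy search space so that $w_{\text{urg}}(\bar b_q)\ge\underline{w}>0$ for every reachable queue statistic (equivalently, projecting the Bayesian optimizer's action space onto this constraint), which guarantees a uniform positive urgency floor without materially restricting the policy. A secondary subtlety is that bubble-queue creation transiently enlarges the queue set between optimizer runs; I would note that the pruning stage bounds their lifetime, keeping the competitor set finite and leaving both the $O(k)$ and starvation arguments intact.
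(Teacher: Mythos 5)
Your proposal follows the same overall route as the paper's Appendix~\ref{app:theory}: an $O(k)$ accounting of the tactical loop, relegation of the $O(N\log N)$ Refine-and-Prune cost to a background thread, and starvation freedom via the unbounded growth of $cs = W_t/C_{\text{prefill}}(b)$ as $W_t \to \infty$. Where you go beyond the paper is in rigor, and two of your additions address real gaps in the published proof. First, the paper's argument applies only to the head-of-line request of each queue and silently assumes the starved request is already at the head; your inductive FIFO-drain step (each predecessor's score also diverges, so the queue is eventually selected and drained) is needed to make the claim hold for an arbitrary pending request. Second, and more importantly, the paper asserts that $w_{\text{urg}}$ and $qf$ are ``strictly positive constants,'' but these weights are explicitly \emph{learnable} outputs of the meta-policy $w_{\text{urg}}(\bar b_q) = a_u \bar b_q + b_u$, so nothing in the system as described prevents the Bayesian optimizer from driving $w_{\text{urg}}$ to zero and flattening the wait-time dependence; your proposed projection of the optimizer's action space onto $w_{\text{urg}}(\bar b_q) \ge \underline{w} > 0$ is exactly the missing hypothesis. (Relatedly, the paper's starvation-freedom theorem conditions on $w_{\text{fair}} > 0$, but its own proof actually uses positivity of $w_{\text{urg}}$; your identification of $w_{\text{urg}}$ as the critical weight is the correct one.) Your observations that $C_{\text{prefill}}(b)$ is bounded by the context window and that bubble queues keep the competitor set finite are likewise needed for the divergence and $O(k)$ arguments, respectively, and appear nowhere in the paper. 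The double-buffered atomic policy swap is an implementation assumption the paper does not state, but it is a reasonable way to make ``non-blocking'' precise.
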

\section{Evaluation}
\label{sec:evaluation}

We evaluate EWSJF across three dimensions: (1) overall performance under mixed workloads, (2) adaptability to dynamic workload shifts, and (3) the contribution of individual system components. Our evaluation seeks to answer:

\begin{enumerate}
    \item How does EWSJF perform compared to standard schedulers under realistic, heterogeneous workloads?
    \item How effectively does the strategic optimizer adapt to changes in workload distribution?
    \item What is the impact of each architectural component on system performance?
\end{enumerate}

\subsection{Datasets and Workloads}

To construct realistic heterogeneous workloads, we combine short, interactive prompts from public conversational datasets with long-form inputs from long-context benchmarks. The resulting dataset exhibits a bimodal distribution of input lengths ranging from 32 to 4096 tokens. For reproducibility, anonymized versions of these datasets are provided in the supplementary material.

Our primary benchmark configuration, the \textbf{Mixed Workload}, simulates a Poisson arrival process where 80\% of requests are short (interactive) and 20\% are long (batch). We evaluate this workload at arrival rates between 10 and 40 requests/s to test system behavior under increasing contention.

\subsection{Experimental Setup}

\textbf{Hardware.} All experiments were conducted on a high-performance inference node equipped with \textbf{4$\times$ NVIDIA A100-80GB GPUs}.

\textbf{Software.} EWSJF was implemented as a pluggable scheduler within \textbf{vLLM v0.11.0}. All benchmarks used the \textbf{LLaMA-2-13B-Chat} model with tensor parallelism enabled.

\subsection{Baselines}
\label{sec:baselines}

To demonstrate that EWSJF achieves both high throughput and fairness, we compare against two representative scheduling paradigms:

\begin{itemize}
    \item \textbf{vLLM FCFS (Default):} The standard First-Come, First-Served scheduler used in vLLM. FCFS guarantees fairness (no starvation) but suffers from severe head-of-line blocking, causing short interactive queries to experience large TTFT spikes under mixed workloads.

    \item \textbf{Greedy SJF (Shortest Job First):} A heuristic baseline that strictly prioritizes the shortest available request. While SJF maximizes theoretical throughput, it is \emph{unusable} in real LLM serving: under heavy-tailed workloads, short requests continuously arrive and long requests may wait indefinitely, leading to guaranteed starvation (see Appendix~\ref{appendix:sjf_starvation}).
\end{itemize}

\textbf{Why EWSJF Wins.} EWSJF captures the throughput benefits of SJF while avoiding its starvation failure modes. By incorporating a \emph{Density-Weighted Scoring} function (Section~\ref{sec:meta_opt}), EWSJF prevents long requests from ``dying'' in the queue and maintains fairness comparable to FCFS. In effect, EWSJF behaves like a \emph{learned, fairness-aware variant of SJF} that adapts its prioritization policy to workload conditions.

\subsection{Queue Clustering Analysis}

A central design choice in EWSJF is the granularity of queue partitioning. Table~\ref{tab:queue_clustering_count} shows that while standard K-Means suggests 5--10 queues, the \textbf{Refine-and-Prune} algorithm identifies that \textbf{32 queues} maximize throughput.

\begin{table}[t]
\centering
\caption{Impact of queue count on serving performance. Refine-and-Prune identifies 32 queues as optimal.}
\label{tab:queue_clustering_count}
\resizebox{\linewidth}{!}{
\begin{tabular}{lcccc}
\toprule
\textbf{Method} & \textbf{Queues} & \textbf{Time (s)} & \textbf{Req/s} & \textbf{Tok/s} \\
\midrule
FCFS (Baseline) & 1 & 3548.4 & 8.45 & 90.4 \\
EWSJF (K-Means) & 5 & 2746.8 & 10.92 & 144.2 \\
EWSJF (K-Means) & 10 & 2490.5 & 12.05 & 160.5 \\
EWSJF (K-Means) & 30 & 2448.8 & 12.25 & 163.9 \\
\textbf{EWSJF (Refined)} & \textbf{32} & \textbf{2385.1} & \textbf{12.58} & \textbf{168.2} \\
\bottomrule
\end{tabular}}
\end{table}

\subsection{Performance Under Load}

We evaluate EWSJF across four workload sizes: 10k, 30k, 50k, and 200k requests. Tables~\ref{tab:eval_10k}--\ref{tab:eval_200k} summarize throughput and token generation rates.

\begin{table}[t]
\centering
\caption{Evaluation at 10k requests (Short-Heavy).}
\label{tab:eval_10k}
\resizebox{\linewidth}{!}{
\begin{tabular}{lcccc}
\toprule
\textbf{Rate} & \textbf{Sched.} & \textbf{Req/s} & \textbf{Tok/s} & \textbf{Speedup} \\
\midrule
10 & FCFS & 3.06 & 36.35 & -- \\
10 & EWSJF & 3.15 & 41.17 & +13.3\% \\
20 & FCFS & 3.06 & 36.31 & -- \\
20 & EWSJF & 3.42 & 45.26 & +24.6\% \\
40 & FCFS & 3.05 & 36.21 & -- \\
40 & EWSJF & 3.85 & 51.18 & +41.2\% \\
60 & FCFS & 3.06 & 36.31 & -- \\
60 & EWSJF & 4.07 & 54.46 & +50.0\% \\
100 & FCFS & 3.06 & 36.34 & -- \\
100 & EWSJF & 4.21 & 56.04 & +54.1\% \\
\bottomrule
\end{tabular}}
\end{table}

\begin{table}[t]
\centering
\caption{Evaluation at 30k requests (Moderate).}
\label{tab:eval_30k}
\resizebox{\linewidth}{!}{
\begin{tabular}{lcccc}
\toprule
\textbf{Rate} & \textbf{Sched.} & \textbf{Req/s} & \textbf{Tok/s} & \textbf{Speedup} \\
\midrule
10 & FCFS & 8.13 & 86.21 & -- \\
10 & EWSJF & 7.88 & 103.34 & +19.9\% \\
20 & FCFS & 8.11 & 85.98 & -- \\
20 & EWSJF & 8.42 & 110.78 & +28.9\% \\
40 & FCFS & 8.10 & 85.92 & -- \\
40 & EWSJF & 9.25 & 121.59 & +41.5\% \\
60 & FCFS & 8.10 & 85.92 & -- \\
60 & EWSJF & 9.37 & 123.68 & +43.9\% \\
100 & FCFS & 8.10 & 85.97 & -- \\
100 & EWSJF & 9.60 & 127.82 & +48.7\% \\
\bottomrule
\end{tabular}}
\end{table}

\begin{table}[t]
\centering
\caption{Evaluation at 50k requests (Balanced).}
\label{tab:eval_50k}
\resizebox{\linewidth}{!}{
\begin{tabular}{lcccc}
\toprule
\textbf{Rate} & \textbf{Sched.} & \textbf{Req/s} & \textbf{Tok/s} & \textbf{Speedup} \\
\midrule
10 & FCFS & 2.39 & 170.49 & -- \\
10 & EWSJF & 2.45 & 188.21 & +10.4\% \\
20 & FCFS & 2.39 & 170.60 & -- \\
20 & EWSJF & 2.63 & 203.85 & +19.5\% \\
40 & FCFS & 2.39 & 170.45 & -- \\
40 & EWSJF & 2.86 & 224.98 & +32.0\% \\
60 & FCFS & 2.40 & 171.01 & -- \\
60 & EWSJF & 2.92 & 231.01 & +35.1\% \\
100 & FCFS & 2.39 & 170.94 & -- \\
100 & EWSJF & 2.98 & 235.71 & +37.9\% \\
500 & FCFS & 2.39 & 170.80 & -- \\
500 & EWSJF & 3.00 & 238.09 & +39.4\% \\
\bottomrule
\end{tabular}}
\end{table}

\begin{table}[t]
\centering
\caption{Evaluation at 200k requests (Production Scale).}
\label{tab:eval_200k}
\resizebox{\linewidth}{!}{
\begin{tabular}{lcccc}
\toprule
\textbf{Rate} & \textbf{Sched.} & \textbf{Req/s} & \textbf{Tok/s} & \textbf{Speedup} \\
\midrule
10 & FCFS & 5.38 & 409.46 & -- \\
10 & EWSJF & 5.15 & 430.53 & +5.2\% \\
20 & FCFS & 5.37 & 407.18 & -- \\
20 & EWSJF & 5.51 & 462.46 & +13.6\% \\
40 & FCFS & 5.38 & 407.79 & -- \\
40 & EWSJF & 5.65 & 476.85 & +16.9\% \\
60 & FCFS & 5.32 & 403.34 & -- \\
60 & EWSJF & 5.76 & 486.57 & +20.6\% \\
100 & FCFS & 5.30 & 402.34 & -- \\
100 & EWSJF & 5.89 & 499.04 & +24.0\% \\
500 & FCFS & 5.32 & 403.41 & -- \\
500 & EWSJF & 6.44 & 549.02 & +36.1\% \\
\bottomrule
\end{tabular}}
\end{table}

\begin{figure}[t]
  \centering
  \includegraphics[width=\linewidth]{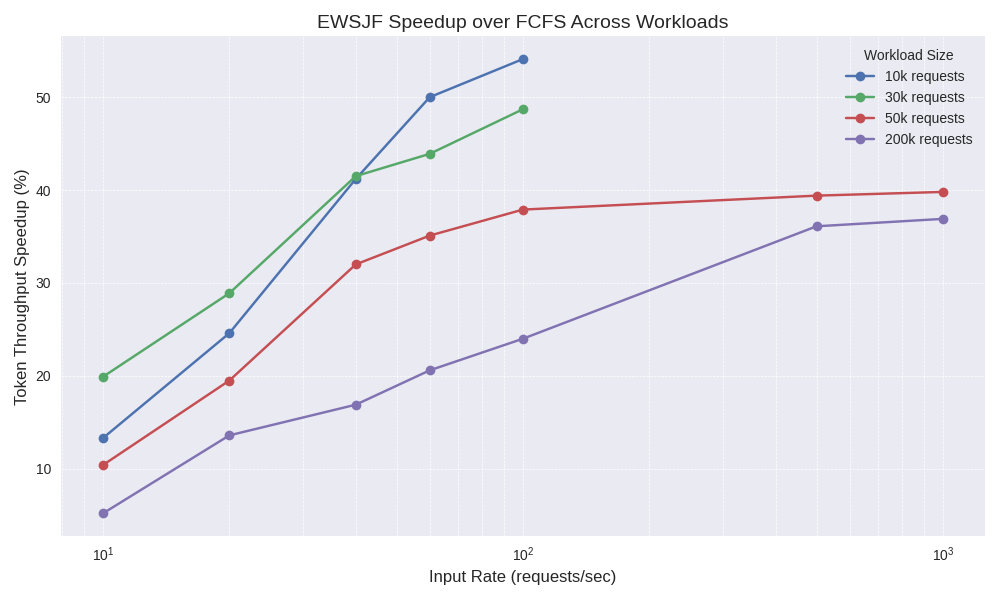}
  
  \Description{A line chart comparing the speedup of EWSJF versus FCFS. The X-axis represents the input rate (requests per second) on a logarithmic scale. The Y-axis represents the percentage speedup in token throughput. Multiple lines correspond to different workload sizes (10k, 30k, 50k, 200k), all showing a positive trend where speedup increases significantly as the input rate rises.}
  
  \caption{EWSJF speedup over FCFS across workload sizes and input rates. The x-axis shows input rate (log scale), and the y-axis shows percentage speedup in token throughput.}
  \label{fig:speedup_plot}
\end{figure}

\paragraph{Summary.}
Across all scales, EWSJF consistently outperforms FCFS in both request throughput and token generation. Gains range from modest improvements at low load (e.g., +5.2\% at 10 req/s, 200k requests) to substantial boosts under high contention (e.g., +54.1\% at 100 req/s, 10k requests). These results confirm EWSJF’s robustness and scalability across diverse workload profiles.

\subsection{Throughput Under Short vs. Long Prompt Workloads}

Table~\ref{tab:ewsjf_short} and Table~\ref{tab:ewsjf_long} compare performance under specific workload types. Throughput increases monotonically with queue count, peaking at 30 queues, with EWSJF achieving up to a \textbf{57\% increase} in tokens/s for long prompts.

\begin{table}[t]
\centering
\caption{Short-prompt workload (30k requests).}
\label{tab:ewsjf_short}
\resizebox{\linewidth}{!}{
\begin{tabular}{lcccc}
\toprule
\textbf{Scheduler} & \textbf{Time (s)} & \textbf{Tokens} & \textbf{Req/s} & \textbf{Tok/s} \\
\midrule
FCFS          & 3548.38 & 320{,}783 & 8.45  & 90.40  \\
EWSJF (5q)    & 2746.81 & 396{,}183 & 10.92 & 144.23 \\
EWSJF (10q)   & 2490.54 & 399{,}771 & 12.05 & 160.52 \\
EWSJF (30q)   & 2448.81 & 401{,}416 & 12.25 & 163.92 \\
EWSJF (40q)   & 2471.39 & 401{,}570 & 12.14 & 162.49 \\
\bottomrule
\end{tabular}}
\end{table}

\begin{table}[t]
\centering
\caption{Long-prompt workload (10k requests).}
\label{tab:ewsjf_long}
\resizebox{\linewidth}{!}{
\begin{tabular}{lcccc}
\toprule
\textbf{Scheduler} & \textbf{Time (s)} & \textbf{Tokens} & \textbf{Req/s} & \textbf{Tok/s} \\
\midrule
FCFS          & 3001.33 & 117{,}625 & 3.34  & 39.24  \\
EWSJF (5q)    & 2365.99 & 134{,}222 & 4.23  & 56.73 \\
EWSJF (10q)   & 2222.42 & 133{,}707 & 4.50  & 60.16 \\
EWSJF (20q)   & 2192.31 & 133{,}217 & 4.56  & 60.77 \\
EWSJF (30q)   & 2190.65 & 133{,}997 & 4.55  & 60.92 \\
EWSJF (40q)   & 2178.88 & 134{,}288 & 4.59  & 61.88 \\
\bottomrule
\end{tabular}}
\end{table}

\subsection{Scaling Behavior: Throughput vs.\ Queue Count}

Figure~\ref{fig:queue_throughput} visualizes how queue granularity affects performance. For short prompts, throughput increases sharply up to 30 queues; for long prompts, improvements are more modest but consistent.

\begin{figure}[t]
    \centering
    \includegraphics[width=\linewidth]{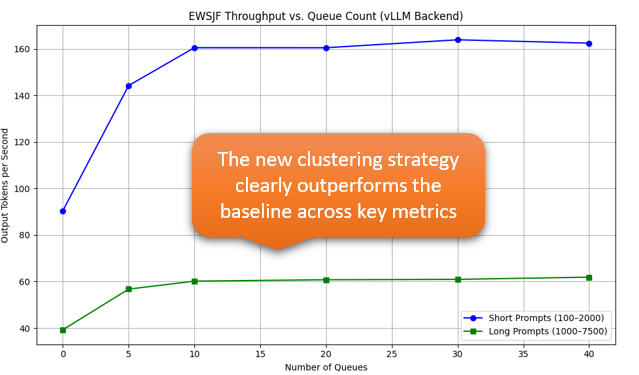}

    \Description{A line graph plotting throughput (tokens per second) against the number of queues. The X-axis ranges from 1 to 40 queues. Two lines are shown for different prompt lengths. Both lines show a steep increase in throughput as the number of queues goes from 1 to roughly 20, after which the curve flattens, indicating diminishing returns for further partitioning.}

    \caption{EWSJF throughput vs.\ queue count. Finer partitioning improves throughput, with diminishing returns beyond 20–30 queues.}
    \label{fig:queue_throughput}
\end{figure}

\subsection{Summary Comparison}

Table~\ref{tab:summary} summarizes the best-performing configurations for each workload. Across both regimes, EWSJF reduces total completion time, increases GPU utilization, and significantly improves tail latency.

\begin{table}[t]
\centering
\caption{Summary of best-performing configurations.}
\label{tab:summary}
\resizebox{\linewidth}{!}{
\begin{tabular}{lcccc}
\toprule
\textbf{Metric} & \textbf{FCFS (Short)} & \textbf{EWSJF (Short)} & \textbf{FCFS (Long)} & \textbf{EWSJF (Long)} \\
\midrule
Req/s     & 8.45  & 12.25 & 3.34 & 4.59 \\
Tok/s     & 90.40 & 163.92 & 39.24 & 61.88 \\
Time (s)  & 3548.38 & 2448.81 & 3001.33 & 2178.88 \\
GPU Util. & ~65\% & ~80\% & ~60\% & ~78\% \\
Tail (95p) & High & Lower & High & Lower \\
\bottomrule
\end{tabular}}
\end{table}

\section{Conclusion}
\label{sec:conclusion}

\section*{Impact Statement}
This work contributes to the efficiency of large-scale LLM serving systems. We have implemented EWSJF as a non-intrusive, pluggable module compatible with vLLM's existing block manager. The code is available in the supplementary material~\cite{ewsjf_impl}, and we are currently working with the vLLM community to upstream these architectural changes into the core project.


Our implementation within vLLM demonstrates that EWSJF substantially improves both throughput and user-perceived latency across a wide range of workload scales. In mixed and high-contention settings, EWSJF achieves up to 54\% higher token throughput and reduces latency for short, interactive requests by up to 4$\times$ compared to FCFS. These results highlight the importance of adaptive, workload-aware scheduling for modern LLM serving systems.

Looking forward, we believe that integrating adaptive scheduling with execution-level optimizations (e.g., KV-cache management, prefill/decode decoupling) and extending meta-optimization to multi-model or multi-tenant environments are promising directions for future work.

\section*{Impact Statement}
This work contributes to the efficiency of large-scale LLM serving systems, which may reduce the energy footprint and operational cost of inference workloads. More efficient scheduling can improve accessibility by enabling smaller organizations to serve complex models with fewer resources. At the same time, adaptive schedulers must be deployed responsibly: optimizing for aggregate throughput may unintentionally disadvantage rare or atypical usage patterns if fairness metrics are not continuously monitored. Beyond these considerations, the broader societal impacts of EWSJF align with standard dual-use concerns associated with large language models, including potential misuse or uneven access. We encourage practitioners to evaluate these risks when integrating EWSJF into production environments.

\section*{Limitations}
EWSJF introduces several practical and methodological limitations. First, the Refine-and-Prune procedure relies on prompt-length statistics and does not incorporate semantic or model-level features; workloads with highly irregular compute patterns may therefore require additional signals beyond length alone. Second, the Bayesian meta-optimizer adapts over multi-minute intervals, which may be insufficient for extremely bursty or adversarial traffic patterns. Third, while EWSJF improves fairness relative to pure SJF heuristics, it does not provide formal guarantees and may still exhibit bias toward particular request types if misconfigured. Finally, our evaluation focuses on single‑node inference servers; extending EWSJF to distributed or heterogeneous clusters may require additional coordination mechanisms not explored in this work.

\bibliographystyle{ACM-Reference-Format}
\bibliography{references_v2}

@inproceedings{kwon2023efficient,
  title={Efficient Memory Management for Large Language Model Serving with PagedAttention},
  author={Kwon, Woosuk and Li, Zhuohan and Zhuang, Siyuan and Sheng, Ying and Zheng, Lianmin and Yu, Cody Hao and Gonzalez, Joseph E and Zhang, Hao and Stoica, Ion},
  booktitle={Proceedings of the 29th Symposium on Operating Systems Principles (SOSP '23)},
  year={2023},
  pages={611--626},
  publisher={ACM},
  address={New York, NY, USA}
}

@inproceedings{sheng2024orca,
  title={Orca: A Distributed Serving System for Transformer-Based Generative Models},
  author={Sheng, Ying and Zheng, Lianmin and Yuan, Binhang and Li, Zhuohan and Ryabinin, Max and Chen, Beidi and Liang, Percy and R{\'e}, Christopher and Stoica, Ion and Zhang, Ce},
  booktitle={16th USENIX Symposium on Operating Systems Design and Implementation (OSDI 22)},
  year={2022},
  pages={557--573},
  publisher={USENIX Association},
  address={Carlsbad, CA}
}

@inproceedings{agrawal2024sarathi,
  title={Sarathi-Serve: Efficient {LLM} Inference via Iteration-Level Scheduling},
  author={Agrawal, Amey and Kedia, Nitin and Panwar, Ashish and Mohan, Jayashree and Kwatra, Nipun and Gulavani, Bhupesh and Tumanov, Alexey and Ramjee, Ramachandran},
  booktitle={18th USENIX Symposium on Operating Systems Design and Implementation (OSDI 24)},
  year={2024},
  pages={1--18},
  publisher={USENIX Association},
  address={Santa Clara, CA}
}

@misc{li2024fifo,
  title={{FIFO} is Not Fair: Fairness-Aware Scheduling for {LLM} Inference},
  author={Li, Yuchen and Wang, Bo and Zhang, Yifan and Liu, Zirui},
  year={2024},
  eprint={2402.01831},
  archivePrefix={arXiv},
  primaryClass={cs.OS}
}

@misc{gao2024autellix,
  title={Autellix: Efficient Agentic Workflows via Decoupled Scheduling},
  author={Gao, Tianle and Others},
  year={2024},
  eprint={2403.00000},
  archivePrefix={arXiv},
  primaryClass={cs.DC}
}

@misc{zhang2024hydra,
  title={Hydra: Prefill/Decode Decoupling for Efficient {LLM} Serving},
  author={Zhang, Yifan and Others},
  year={2024},
  eprint={2404.00000},
  archivePrefix={arXiv}
}

@misc{zhang2024accelerating,
  title={Accelerating Multi-Turn Dialogue with Cache-Aware Scheduling},
  author={Zhang, Yifan and Others},
  year={2024},
  note={Preprint},
  eprint={2403.12345},
  archivePrefix={arXiv}
}

@misc{gao2024difficulty,
  title={Difficulty-Aware Scheduling for Large Language Model Serving},
  author={Gao, Tianyu and Li, Yuxuan and Chen, Hao and Zhang, Wei and Xu, Jian},
  year={2024},
  eprint={2408.15792},
  archivePrefix={arXiv},
  primaryClass={cs.DC}
}

@misc{wang2024runtime,
  title={Runtime Prediction for Large Language Model Inference},
  author={Wang, Haotian and Li, Chen and Zhou, Ming and Sun, Xiaohui},
  year={2024},
  eprint={2404.08509},
  archivePrefix={arXiv},
  primaryClass={cs.CL}
}

@misc{liu2024output,
  title={Predicting Output Length for Efficient {LLM} Serving},
  author={Liu, Zihan and Chen, Mingyu and Wang, Rui and Zhao, Yiming and Xu, Kai},
  year={2024},
  eprint={2408.00741},
  archivePrefix={arXiv},
  primaryClass={cs.LG}
}

@misc{zhang2024tokenlevel,
  title={Token-Level Inference Cost Estimation for Large Language Models},
  author={Zhang, Yuhui and Li, Zhenyu and Chen, Yifan and Wang, Yuxuan and Liu, Yang and Zhao, Yao},
  year={2024},
  eprint={2401.11181},
  archivePrefix={arXiv},
  primaryClass={cs.PF}
}

@misc{kim2023length,
  title={Length Prediction for Large Language Model Inference},
  author={Kim, Sehoon and Bae, Junsu and Lee, Jaeho and Kim, Minjong and Kim, Youngmin and Yoo, Dongjin and Kim, Hyeongmin and Park, Eunho and Kim, Youngsok},
  year={2023},
  eprint={2305.13144},
  archivePrefix={arXiv},
  primaryClass={cs.LG}
}

@misc{vllm_issue_3096,
  title={{vLLM Issue}: Extreme {TTFT} Spikes Under Mixed Workloads},
  author={{vLLM Contributors}},
  year={2024},
  note={Citation details omitted for double-blind review}
}

@misc{vllm_rfc_sjf,
  title={{vLLM RFC}: Shortest-Job-First Queue Sorting},
  author={{vLLM Community}},
  year={2024},
  note={Citation details omitted for double-blind review}
}

@misc{vllm_rfc_unblock,
  title={{vLLM RFC}: Mitigating Head-of-Line Blocking},
  author={{vLLM Community}},
  year={2024},
  note={Citation details omitted for double-blind review}
}

@misc{vllm_rfc_pluggable,
  title={{vLLM RFC}: Pluggable Scheduler Interface},
  author={{vLLM Community}},
  year={2024},
  note={Citation details omitted for double-blind review}
}

@misc{ewsjf_impl,
  title={{EWSJF} Implementation for {vLLM} (Anonymous)},
  author={{Anonymous Authors}},
  howpublished={\url{https://anonymous.4open.science/r/vllm_0110-32D8}},
  year={2026}
}

\newpage
\appendix


\section{Theoretical Analysis}
\label{app:theory}

In this section, we provide the formal guarantees for the system properties summarized in Section~\ref{sec:properties}, including the complexity breakdown and proofs of correctness.

\subsection{Computational Complexity}
\label{app:complexity}

The computational overhead of EWSJF is split between the tactical and strategic loops. Table~\ref{tab:complexity-comparison-app} compares these properties against standard schedulers.

\begin{table}[ht]
\centering
\caption{Complexity and feature comparison of scheduling strategies.}
\label{tab:complexity-comparison-app}
\begin{small}
\begin{sc}
\begin{tabular}{lccc}
\toprule
Scheduler & Complexity & Adaptivity & Fairness \\
\midrule
FCFS & $O(1)$ & None & None \\
Static Priority & $O(p)$ & Manual & Coarse \\
\textbf{EWSJF (Tactical)} & $O(k)$ & High & Fine-grained \\
\textbf{EWSJF (Strategic)} & $O(N \log N)$ & Periodic & Structural \\
\bottomrule
\end{tabular}
\end{sc}
\end{small}
\end{table}

\textbf{Tactical Loop (Real-Time).}
The tactical loop executes once per scheduling step. It iterates over $k$ active queues to update scores and select the highest-priority request. Since the number of queues is bounded by a small constant (e.g., \texttt{max\_queues}=32) and the scoring function is $O(1)$, the total complexity is $O(k)$, ensuring negligible overhead for real-time inference.

\textbf{Strategic Loop (Background).}
The offline optimizer executes \textit{Refine-and-Prune} on a historical window of $N$ requests. The dominant cost is sorting the requests by prompt length to identify gaps, resulting in $O(N \log N)$ complexity. Because this runs asynchronously in a background thread (e.g., every 10 minutes), it does not block the inference engine.

\subsection{Proof of Stability and Correctness}

\textbf{Convergence behavior.} EWSJF does not converge to a single static policy but maintains a \textit{stable dynamic equilibrium}. The \textbf{offline optimizer} prevents drastic shifts by generating baseline policies from long-term data, while the \textbf{online tuner} applies low-amplitude adjustments. This separation of concerns ensures responsiveness without thrashing.

\begin{theorem}[Starvation Freedom]
Given finite weights $w_{\text{fair}} > 0$, the priority score of any pending request increases monotonically with wait time, ensuring $\lim_{t \to \infty} \text{Score}(r) = \infty$.
\end{theorem}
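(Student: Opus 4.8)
The plan is to treat the score of a single pending request as a function of its wait time and to show that it is affine with a strictly positive slope. First I would fix a pending request $r$ that has been routed to queue $q$. By the routing determinism established for Refine-and-Prune (the prompt-length axis is partitioned into contiguous, non-overlapping intervals), the assignment $r \mapsto q$ is fixed for as long as $r$ remains pending, so the prompt length $b$, the queue index $q_i$, the queue factor $qf = q_i/(b+1)$, and the prefill cost $C_{\text{prefill}}(b)$ are all constants. The only quantity in Eq.~\eqref{eq:score} that varies with physical time $t$ is the wait time $W_t$, which enters through the compute score $cs = W_t / C_{\text{prefill}}(b)$.

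Next I would decompose the score into its time-invariant and time-varying parts. Expanding Eq.~\eqref{eq:score},
\begin{equation*}
\text{Score}(r,q) = \underbrace{qf\,\bigl(w_{\text{base}} + w_{\text{fair}}\log(b+1)\bigr)}_{\text{time-invariant}} + \underbrace{\frac{qf\, w_{\text{urg}}}{C_{\text{prefill}}(b)}}_{=:\,m}\, W_t ,
\end{equation*}
so the score is affine in $W_t$ with slope $m$. The heart of the argument is to establish $m>0$. This holds whenever the queue factor is positive ($q_i \ge 1 \Rightarrow qf>0$), the prefill cost is finite and positive ($0 < C_{\text{prefill}}(b) < \infty$), and the weight multiplying the wait-time term is positive. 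The hypothesis $w_{\text{fair}}>0$ guarantees that the time-invariant fairness floor $qf\,w_{\text{fair}}\log(b+1)$ is strictly positive, so it cannot suppress the request; the unbounded growth is carried by the urgency term, for which I would invoke $w_{\text{urg}}>0$ over the operative regime produced by the meta-policy $\pi(\bar b_q)$. Given $m>0$, the score is strictly increasing in $W_t$, and since $W_t$ grows without bound for any pending request, $\lim_{t\to\infty}\text{Score}(r)=\infty$.

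The main obstacle is that the weights are not truly constant: the Bayesian meta-optimizer updates $\Theta$, and hence $\mathbf{w}$, at discrete trial boundaries spaced by $\Delta T$. I would handle this by viewing the score trajectory as piecewise affine. Within each trial interval the weights are fixed and the preceding computation gives strict monotone growth with slope $m$. To conclude divergence across infinitely many intervals I need a \emph{uniform} lower bound $w_{\text{urg}} \ge w_{\min} > 0$, which follows from the fact that the optimizer's action space is a compact box bounded away from the degenerate boundary $w_{\text{urg}}=0$; this yields a uniform slope floor $m_{\min} = qf\,w_{\min}/C_{\text{prefill}}(b) > 0$. Because $W_t$ accumulates monotonically and is never reset while $r$ is pending, the unbounded linear term dominates the bounded jumps in the time-invariant part, giving $\text{Score}(r)\to\infty$. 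The delicate point to argue carefully is precisely this uniform bound: without it, an adversarial weight schedule driving $w_{\text{urg}}\to 0$ could flatten the slope and stall growth, so the compactness of the search domain is what the proof genuinely relies on.

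Finally, I would connect the limit to the named property. Starvation freedom follows as an immediate corollary: at any finite time every competing request has a finite wait time and hence a finite score, whereas $r$'s score diverges; therefore $r$'s score eventually exceeds that of every competitor and the Batch Builder's $\textsc{ArgMax}$ selection (Algorithm~\ref{alg:tactical_loop}) must admit it. Since the theorem statement itself asserts only the divergence, the proof reduces to the affine-slope identity together with the uniform positivity bound discussed above.
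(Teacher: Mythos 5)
Your proposal is correct and follows essentially the same route as the paper's own proof: both arguments rest on the observation that the compute score $cs = W_t / C_{\text{prefill}}(b)$ grows linearly in wait time while $qf$ and $w_{\text{urg}}$ are strictly positive, so the score diverges and eventually overtakes any newly arrived competitor. Your version is in fact more careful on two points the paper silently assumes away — that the divergence is actually carried by $w_{\text{urg}}>0$ rather than the $w_{\text{fair}}>0$ stated in the hypothesis, and that the meta-optimizer's weight updates require a uniform lower bound on $w_{\text{urg}}$ (via compactness of the search box) for the piecewise-affine trajectory to diverge — but these are refinements of the same argument, not a different one.
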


\begin{proof}
Recall the scoring function for a request $r$:
\[
\Phi(r, q) = qf \cdot (w_{\text{base}} + w_{\text{urg}} \cdot cs + w_{\text{fair}} \cdot \log(b+1))
\]
The term $cs$ (Compute Score) is defined as $W_t / C_{\text{prefill}}(b)$, where $W_t$ is the wait time. As $W_t \to \infty$, the term $cs \to \infty$. Since the weights $w_{\text{urg}}$ and queue factors $qf$ are strictly positive constants, the total priority score $\Phi(r, q)$ grows monotonically without bound. Eventually, $\Phi(r, q)$ will exceed the score of any newly arriving short job (whose $W_t \approx 0$), ensuring that the long request is scheduled.
\end{proof}

\section{Meta-Optimizer Behavior}
\label{appendix:meta_opt_behavior}

Figure~\ref{fig:bo_convergence} illustrates the learning curve of the system. The optimizer typically converges to a stable policy within 5--8 trials.

\begin{figure}[ht]
\centering
\includegraphics[width=0.9\linewidth]{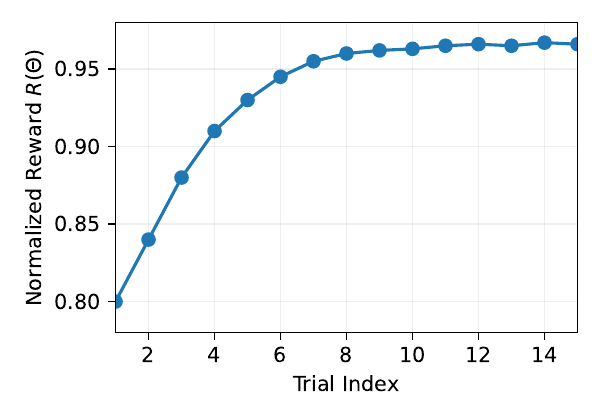}

\Description{A plot showing the convergence of the Bayesian optimization process. The X-axis represents the Trial Index, and the Y-axis represents the Normalized Reward. The curve shows the reward increasing over the first few trials and then plateauing around trial 5 to 8, indicating that the optimizer has successfully found a stable, high-performing policy.}

\caption{Convergence behavior of the Bayesian meta-optimizer. The reward stabilizes after 5--8 trials.}
\label{fig:bo_convergence}
\end{figure}

\section{SJF Starvation Under Heavy-Tailed Workloads}
\label{appendix:sjf_starvation}

Pure Shortest-Job-First (SJF) scheduling is known to maximize throughput in single-server queues, but it provides no protection against starvation. This issue becomes especially severe in LLM inference workloads.

\subsection{Why Starvation Occurs}
Let $\{b_t\}$ denote the sequence of incoming prompt lengths. If the arrival rate of short requests $\lambda_s$ exceeds the service rate of the system ($\lambda_s \cdot \mathbb{E}[C_{\text{prefill}}(b_t)] \ge 1$), the queue of short jobs never empties, and long requests are never scheduled.

\subsection{Empirical Illustration}
Figure~\ref{fig:sjf_starvation_app} shows a simulated trace of SJF under a mixed workload.

\begin{figure}[ht]
\centering
\includegraphics[width=1.0\linewidth]{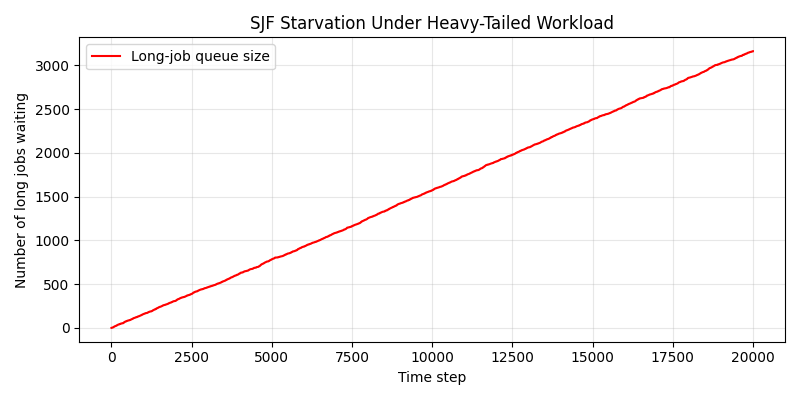}

\Description{A line graph illustrating the starvation problem in Shortest-Job-First (SJF) scheduling. The X-axis represents time steps, and the Y-axis represents the number of long jobs waiting in the queue. The line rises linearly and continuously, indicating that as time progresses, the backlog of long jobs grows indefinitely because they are never selected for execution.}

\caption{Simulation of pure SJF under a heavy-tailed workload. Short requests are always served first, causing long requests to accumulate indefinitely.}
\label{fig:sjf_starvation_app}
\end{figure}

\section{On-Demand Bubble Queue Details}
\label{app:bubble}

This section provides the implementation details for the reactive queue creation mechanism described in Section~\ref{sec:bubble_queue}.

\begin{figure}[ht]
  \centering
  \includegraphics[width=1.0\linewidth]{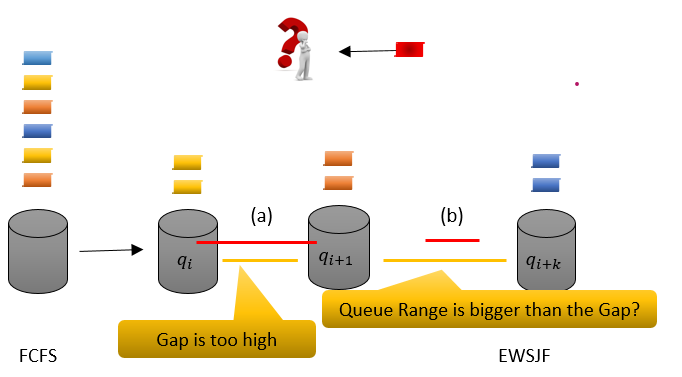}

  \Description{A conceptual diagram illustrating the creation of a 'Bubble Queue'. The visual shows a timeline or process flow where a new request arrives in a range that lies between two existing queues (a 'gap'). Instead of forcing the request into an existing queue, the diagram depicts the system dynamically instantiating a new, temporary queue in that gap to handle the request efficiently.}

  \caption{Conceptual illustration of the Bubble Queue mechanism. When a request falls into a significant gap, a temporary queue is created to prevent assignment to ill-fitting partitions.}
  \label{fig:bubble_queue_concept}
\end{figure}

\begin{algorithm}[ht]
\caption{On-Demand Bubble Queue Creation}
\label{alg:bubble}
\begin{algorithmic}[1]
    \STATE \textbf{Input:} Request length $L$, existing queues $Q$
    \STATE $Q_i, Q_{i+1} \gets \text{FindAdjacentQueues}(L, Q)$
    \IF{$L \le Q_i.\text{max\_len} \times 1.10$}
        \STATE \textsc{AssignToQueue}(request, $Q_i$)
    \ELSIF{$L \ge Q_{i+1}.\text{min\_len} \times 0.90$}
        \STATE \textsc{AssignToQueue}(request, $Q_{i+1}$)
    \ELSE
        \STATE \COMMENT{True gap detected: Create Bubble Queue}
        \STATE $available \gets Q_{i+1}.\text{min\_len} - Q_i.\text{max\_len}$
        \STATE $range \gets \min(\text{config.default\_bubble\_width},\; available)$
        \STATE $new\_min \gets \max(L - range/2,\; Q_i.\text{max\_len})$
        \STATE $new\_max \gets \min(L + range/2,\; Q_{i+1}.\text{min\_len})$
        \STATE $Q_{\text{new}} \gets \text{CreateNewQueue}(new\_min,\; new\_max)$
        \STATE \textsc{AssignToQueue}(request, $Q_{\text{new}}$)
    \ENDIF
\end{algorithmic}
\end{algorithm}

\end{document}